\newcommand{\ceiling}[1]{\lceil #1 \rceil}
\newcommand{\floor}[1]{\lfloor #1 \rfloor}
\newcommand{\merge}{\mathsf{stable\_merge}}
\newcommand{\corank}{\mathsf{co\_rank}}
\newcommand{\cand}{\mathbf{and}}
\newtheorem{lemma}{Lemma}
\newtheorem{proposition}{Proposition}
\newenvironment{proof}{\emph{Proof:}}{$\Box$\newline}
\begin{document}

\title{Perfectly load-balanced, optimal, stable, parallel merge}

\author{Christian Siebert \\
RWTH Aachen University, Department of Computer Science\\
Laboratory for Parallel Programming\\
Schinkelstrasse 2a, 52062 Aachen, Germany\\
\texttt{christian.siebert@rwth-aachen.de}
\and
Jesper Larsson Tr\"aff\\
Vienna University of Technology, Institute of Information Systems\\
Research Group Parallel Computing\\
Favoritenstrasse 16, 1040 Wien, Austria\\
\texttt{traff@par.tuwien.ac.at}
}

\maketitle

\begin{abstract}
We present a simple, work-optimal and synchronization-free
solution to the problem of stably merging in parallel two given,
ordered arrays of $m$ and $n$ elements into an ordered array of $m+n$
elements.  The main contribution is a new, simple, fast and direct
algorithm that determines, for any prefix of the stably merged output
sequence, the exact prefixes of each of the two input sequences needed
to produce this output prefix.  More precisely, for any given index
(rank) in the resulting, but not yet constructed output array
representing an output prefix, the algorithm computes the indices
(co-ranks) in each of the two input arrays representing the required
input prefixes \emph{without} having to merge the input arrays.  The
co-ranking algorithm takes $\mathcal{O}(\log\min(m,n))$ time steps.
The algorithm is used to devise a \emph{perfectly load-balanced},
\emph{stable}, parallel merge algorithm where each of $p$ processing
elements has \emph{exactly} the same number of input elements to
merge.  Compared to other approaches to the parallel merge problem,
our algorithm is considerably simpler and can be faster up to a factor
of two. Compared to previous algorithms for solving the co-ranking
problem, the algorithm given here is direct and maintains stability in
the presence of repeated elements at no extra space or time cost. When
the number of processing elements $p$ does not exceed
$(m+n)/\log\min(m,n)$, the parallel merge algorithm has optimal
speedup. It is easy to implement on both shared and distributed memory
parallel systems.
\end{abstract}


\section{Introduction}
\label{sec:intro}

We consider the problem of \emph{stably} merging two ordered sequences
in parallel. We assume the two sequences with $m$ and $n$ elements
respectively to be stored in arrays. Elements have a key, and an
ordering relation denoted by $\leq$ is defined on the keys.  The task
is to produce an output sequence consisting of all input elements in
order. Figure~\ref{fig:merging} illustrates the problem, where the
height of each bar corresponds to the element key. Merging the two
sequences sequentially can be done in $\mathcal{O}(m + n)$ operations
(cf. ~\cite{Knuth73:taocp3}).  Most sequential algorithms are
naturally stable, meaning that the relative order of elements with the
same key is preserved.  Figure~\ref{fig:merging} illustrates this: the
equal-keyed elements $\alpha$ and $\beta$ both occur in the output with the
element $\alpha$ from the $A$ array before the element $\beta$ from
the $B$ array, and with all elements from $A$ before $\alpha$ also occuring
before $\alpha$ in the output array.
Stability is important for many applications.

\begin{figure}
   \begin{center}
%
\begin{tikzpicture}
   \colorlet{colbars}{blue!50!white}	
   \colorlet{colcloud}{yellow!20!white}	
   \colorlet{colAbar}{red!50!white}	
   \colorlet{colBbar}{green!50!white}	

   \node[font=\footnotesize,color=black,anchor=west] at (0.1cm,0.5cm) {$A$};
   \foreach \x/\val in {
	1/0.05,2/0.10,3/0.15,4/0.15,5/0.20,6/0.30,7/0.40,8/0.40,9/0.40,
	10/0.50,11/0.60,12/0.60,13/0.65,14/0.70,15/0.75,16/0.80,17/0.95,
	18/0.95}
   {
      \draw[colbars!50!black,fill=colbars] (0.45cm + \x cm*0.25,0.0cm) --
		(0.7cm + \x cm*0.25,0.0cm) -- (0.7cm + \x cm*0.25,\val cm) --
		(0.45cm + \x cm*0.25,\val cm) -- (0.45cm + \x cm*0.25,0.0cm);
   }
   \draw[colAbar!50!black,fill=colAbar] (0.45cm + 9cm*0.25,0.0cm) --
		(0.7cm + 9cm*0.25,0.0cm) -- (0.7cm + 9cm*0.25,0.40cm) --
		(0.45cm + 9cm*0.25,0.40cm) -- (0.45cm + 9cm*0.25,0.0cm);
   \node[font=\footnotesize,color=black,anchor=south] at (0.575cm + 9cm*0.25,0.40cm) {$\alpha$};

   \draw[black,fill=none] (0.7cm,0.0cm) -- (5.2cm,0.0cm) -- (5.2cm,1.0cm) --
			   (0.7cm,1.0cm) -- (0.7cm,0.0cm);
   \node[font=\footnotesize,color=black,anchor=south] at (0.7cm,1.0cm) {$0$};
   \node[font=\footnotesize,color=black,anchor=south] at (5.2cm,1.0cm) {$m$};

   \node[font=\footnotesize,color=black,anchor=west] at (6.1cm,0.5cm) {$B$};
   \foreach \x/\val in {
	1/0.10,2/0.15,3/0.20,4/0.20,5/0.30,6/0.40,7/0.40,8/0.50,9/0.50,
	10/0.50,11/0.60,12/0.65,13/0.65,14/0.65,15/0.65,16/0.65,17/0.70,
	18/0.75,19/0.75,20/0.80,21/0.85,22/0.85,23/0.90,24/0.95}
   {
      \draw[colbars!50!black,fill=colbars] (6.45cm + \x cm*0.25,0.0cm) --
                (6.7cm + \x cm*0.25,0.0cm) -- (6.7cm + \x cm*0.25,\val cm) --
                (6.45cm + \x cm*0.25,\val cm) -- (6.45cm + \x cm*0.25,0.0cm);
   }
   \draw[colBbar!50!black,fill=colBbar] (6.45cm + 6cm*0.25,0.0cm) --
		(6.7cm + 6cm*0.25,0.0cm) -- (6.7cm + 6cm*0.25,0.40cm) --
		(6.45cm + 6cm*0.25,0.40cm) -- (6.45cm + 6cm*0.25,0.0cm);
   \node[font=\footnotesize,color=black,anchor=south] at (6.575cm + 6cm*0.25,0.40cm) {$\beta$};

   \draw[black,fill=none] (6.7cm,0.0cm) -- (12.7cm,0.0cm) -- (12.7cm,1.0cm) --
			   (6.7cm,1.0cm) -- (6.7cm,0.0cm);
   \node[font=\footnotesize,color=black,anchor=south] at (6.7cm,1.0cm) {$0$};
   \node[font=\footnotesize,color=black,anchor=south] at (12.7cm,1.0cm) {$n$};

   \draw[fill=colcloud] (3.8cm,-0.5cm) -- (4.6cm,-1.5cm) -- (8.8cm,-1.5cm) -- (9.6cm,-0.5cm)
			-- (3.8cm,-0.5cm);
   \node[fill=none] at (6.7cm,-1.0cm) {stable two-way merge};
   \draw[-latex,thick] (3.2cm,0.0cm) .. controls +(0.0cm,-0.7cm) and +(-0.7cm,0.0cm) .. (4.2cm,-1.0cm);
   \draw[-latex,thick] (10.2cm,0.0cm) .. controls +(0.0cm,-0.7cm) and +(0.7cm,0.0cm) .. (9.2cm,-1.0cm);
   \draw[-latex,thick] (6.7cm,-1.5cm) -- (6.7cm,-2.0cm);

   \node[font=\footnotesize,color=black,anchor=west] at (0.85cm,-2.5cm) {$C$};
   \foreach \x/\val in {
	1/0.05,2/0.10,3/0.10,4/0.15,5/0.15,6/0.15,7/0.20,8/0.20,9/0.20,
	10/0.30,11/0.30,12/0.40,13/0.40,14/0.40,15/0.40,16/0.40,17/0.50,
	18/0.50,19/0.50,20/0.50,21/0.60,22/0.60,23/0.60,24/0.65,25/0.65,
	26/0.65,27/0.65,28/0.65,29/0.65,30/0.70,31/0.70,32/0.75,33/0.75,
	34/0.75,35/0.80,36/0.80,37/0.85,38/0.85,39/0.90,40/0.95,41/0.95,42/0.95}
   {
      \draw[colbars!50!black,fill=colbars] (1.2cm + \x cm*0.25,-3.0cm) --
                (1.45cm + \x cm*0.25,-3.0cm) -- (1.45cm + \x cm*0.25,\val cm -3.0cm) --
                (1.2cm + \x cm*0.25,\val cm -3.0cm) -- (1.2cm + \x cm*0.25,-3.0cm);
   }
   \draw[colAbar!50!black,fill=colAbar] (1.2cm + 14cm*0.25,-3.0cm) --
		(1.45cm + 14cm*0.25,-3.0cm) -- (1.45cm + 14cm*0.25,0.40cm -3.0cm) --
		(1.2cm + 14cm*0.25,0.40cm -3.0cm) -- (1.2cm + 14cm*0.25,-3.0cm);
   \node[font=\footnotesize,color=black,anchor=south] at (1.325cm + 14cm*0.25,0.40cm -3.0cm) {$\alpha$};
   \draw[colBbar!50!black,fill=colBbar] (1.2cm + 15cm*0.25,-3.0cm) --
		(1.45cm + 15cm*0.25,-3.0cm) -- (1.45cm + 15cm*0.25,0.40cm -3.0cm) --
		(1.2cm + 15cm*0.25,0.40cm -3.0cm) -- (1.2cm + 15cm*0.25,-3.0cm);
   \node[font=\footnotesize,color=black,anchor=south] at (1.325cm + 15cm*0.25,0.40cm -3.0cm) {$\beta$};

   \draw[black,fill=none] (1.45cm,-3.0cm) -- (11.95cm,-3.0cm) --
		(11.95cm,-2.0cm) -- (1.45cm,-2.0cm) -- (1.45cm,-3.0cm);
   \node[font=\footnotesize,color=black,anchor=south] at (1.45cm,-2.0cm) {$0$};
   \node[font=\footnotesize,color=black,anchor=south] at (11.95cm,-2.0cm) {$m+n$};

\end{tikzpicture}
   \end{center}
   \caption{Merging takes two ordered input sequences stored in arrays
     $A$ and $B$ and produces an ordered output sequence stored in
     array $C$.  Stability preserves the original order of equal-keyed
     elements, such as $\alpha$ and $\beta$, with elements from $A$
     occuring before elements from $B$.}
   \label{fig:merging}
\end{figure}
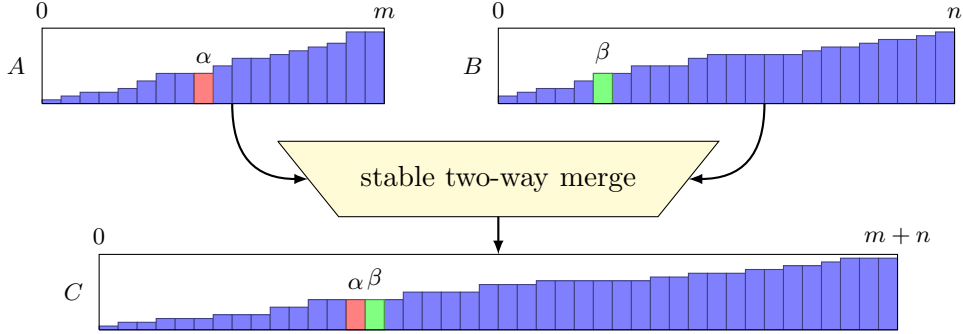

Many parallel two-way merge algorithms for large, ordered input
arrays work as follows. In each of the arrays, a number proportional
to the number of available processing elements of fixed, equidistant
elements are chosen. For each the \emph{(cross)rank}, which is the
number of elements smaller than the chosen element in the other array,
is determined via binary search. Cross ranks and indices of the chosen
elements are used to determine disjoint array segments of the two
input arrays that can be merged independently in parallel. On a
parallel machine with $p$ processing elements, the array segments to
be merged by a single processing element can be guaranteed to be of
size at most $\ceiling{m/p}$ and $\ceiling{n/p}$ respectively, but the size
of the segments for different processing elements may differ by a
factor of two. This pattern is found for instance in well-known PRAM
(Parallel Random Access Machine, see, e.g.,~\cite{JaJa92}) and BSP
(Bulk Synchronous Parallel model, see, e.g.,~\cite{Valiant90:bridge})
algorithms~\cite{GerbessiotisSiniolakis01,HagerupRub89,KatajainenLevcopoulosPetersson93,ShiloachVishkin81}. Many
of these algorithms use a separate merge step to determine the pairs of
segments to merge. However, this extra step can be eliminated as shown
in~\cite{Traff12:merge}. All of these parallel algorithms take
$\mathcal{O}((n+m)/p+\max(\log n,\log m))$ operations, are therefore
work-optimal for $p \leq (m+n)/\max(\log n,\log m)$, such that the speedup
compared to an optimal sequential algorithm is
$\mathcal{O}(p)$. However, the load imbalance caused by the inexact
determination of segments can limit the speedup to
$p/2$. It is important to note that the dominant part of work in
both parallel and sequential merge algorithm is done by the
\emph{same} (best) sequential merge algorithm.

In this paper, we show how the segments to merge can be determined
such that all processing elements get \emph{exactly} the same number of
elements to merge, with a surprisingly simple and intuitive idea. For
any index (\emph{rank}) representing a prefix of the not yet computed,
ordered output sequence the approach exactly determines the prefix
(\emph{co-rank}) of each of the input sequences that is needed to make
up the given output prefix when the inputs are stably merged. To merge
in parallel, the processing elements are simply assigned disjoint
segments of the output (of roughly equal size) and they use the co-ranking
algorithm to determine disjoint blocks of the two input arrays to
merge. No synchronization is needed between the processing elements.
For any given output rank $i,0\leq i<m+n$, co-ranks can be determined in
$\mathcal{O}(\log\min(m,n))$ operations and parallel merging can therefore
be carried out in $\mathcal{O}((m+n)/p +\log(\min(m,n)))$ operations
per processing element.

This approach closes the gap up to a lower order term of
$\mathcal{O}(\log(\min(m,n)))$ between the actual number of operations
to be carried out by parallel and sequential algorithms. Parallel
algorithms that do the same number of operations with the same
constant factors as the corresponding, best sequential algorithms are
rare. A balanced number of operations is important, because any
additional constant factor overhead leads to a proportional loss of
processing resources. In addition to these performance differences,
the varying number of elements per processing element influences the
memory consumption per processing element, which can be problematic
especially for distributed-memory architectures.

Guaranteeing stability is sometimes problematic for parallel merge
algorithms.  Whenever stability is required by the application, a
standard trick is to merge according to a lexicographic order on
key-index pairs.  From a practical point of view, this technique leads
to undesirable extra compute and space costs. In the algorithm
presented here, stability comes at no extra costs---neither
additional index comparisons nor space consuming
lexicographic orderings are required.

The co-ranking idea is not totally new. It was introduced
in~\cite{AklSantoro87} and used subsequently
in~\cite{DeoJainMedidi94,DeoSarkar91}
and~\cite{VarmanIyerHaderleDunn90,VarmanScheuflerIyerRicard91}, but
seems to have been somewhat
overseen. In~\cite{DeoJainMedidi94,DeoSarkar91} the co-ranking problem
is solved only for the median, and the general problem reduced to the
median case. The specific contribution in this paper is a simple,
direct implementation that works for any output index and is stable by
design. A distributed-memory implementation of a previous version is
described in~\cite{Traff12:mpimerge}. Some recent algorithms for
merging achieve similarly good partitions~\cite{GreenMcCollBader12,OdehGreenMwassiShuuliBirk12} and were in fact inspired
directly by the algorithms in~\cite{DeoJainMedidi94,DeoSarkar91}. We
think the co-ranking algorithm presented next is more intuitive,
with a simpler proof, and slightly better bounds.

\section{A co-ranking algorithm}

Let $A$ and $B$ be the two input arrays with $m$ and $n$ elements,
respectively. We follow C programming language conventions and index
arrays from $0$.  Both arrays are ordered according to an ordering
relation $\leq$ denoting comparison of element keys such that $A[j-1]
\leq A[j]$ for $1\leq j< m$, and $B[k-1] \leq B[k]$ for $1\leq
k<n$. Ultimately, we are interested in performing a stable merge of
$A$ and $B$ into an array $C$ with $m+n$ elements.  We denote this by
$C=\merge(A,B,\leq)$. For any $i, 0\leq i<m+n$ in $C$ there is either a
$j, 0\leq j<m$ such that $C[i]=A[j]$ or a $k, 0\leq k<n$
such that $C[i]=B[k]$.  Furthermore, for any $i$-element prefix
$C[0,\ldots, i-1]$ of $C$ there must be indices $j$ and $k$ of $A$ and
$B$ such that
$C[0,\ldots,i-1]=\merge(A[0,\ldots,j-1],B[0,\ldots,k-1],\leq)$. We
will show in Lemma~\ref{lem:coranks} that these $j$ and $k$ indices are
indeed \emph{unique}. They define the prefixes of $A$ and $B$ needed
to form the prefix of $C$ of length $i$. For an
element $C[i]$ we call the index $i$ its \emph{rank}, and the
unique indices $j$ and $k$ its \emph{co-ranks}.
Consequently, we use the term \emph{co-ranking}
for the process of determining $j$ and $k$ from $A$, $m$, $B$, $n$ and $i$.
Figure~\ref{fig:coranking} illustrates the co-rank definition and process.

\begin{figure}
   \begin{center}
%
\begin{tikzpicture}
   \colorlet{colbars}{blue!50!white}	
   \colorlet{colbarsoff}{gray!50}	
   \colorlet{colcloud}{yellow!20!white}	
   \colorlet{colranks}{red!80!black}	
   \colorlet{colsmall}{blue!50!white}	
   \colorlet{collarge}{green!50!white}	
   \colorlet{colsmallC}{colsmall!30!white}	
   \colorlet{collargeC}{collarge!30!white}	

   \node[font=\footnotesize,color=black,anchor=west] at (0.1cm,0.5cm) {$A$};
   \foreach \x/\val in {
	1/0.05,2/0.10,3/0.15,4/0.15,5/0.20,6/0.30,7/0.40,8/0.40,9/0.40,
	10/0.50,11/0.60,12/0.60}
   {
      \draw[colsmall!50!black,fill=colsmall] (0.45cm + \x cm*0.25,0.0cm) --
		(0.7cm + \x cm*0.25,0.0cm) -- (0.7cm + \x cm*0.25,\val cm) --
		(0.45cm + \x cm*0.25,\val cm) -- (0.45cm + \x cm*0.25,0.0cm);
   }
   \foreach \x/\val in {13/0.65,14/0.70,15/0.75,16/0.80,17/0.95,18/0.95}
   {
      \draw[collarge!50!black,fill=collarge] (0.45cm + \x cm*0.25,0.0cm) --
		(0.7cm + \x cm*0.25,0.0cm) -- (0.7cm + \x cm*0.25,\val cm) --
		(0.45cm + \x cm*0.25,\val cm) -- (0.45cm + \x cm*0.25,0.0cm);
   }
   \draw[black,fill=none] (0.7cm,0.0cm) -- (5.2cm,0.0cm) -- (5.2cm,1.0cm) --
			   (0.7cm,1.0cm) -- (0.7cm,0.0cm);
   \draw[colranks,thick] (0.7cm + 12cm*0.25,1.2cm) -- (0.7cm + 12cm*0.25,0.0cm);
   \node[font=\footnotesize,color=black,anchor=south] at (0.7cm + 12cm*0.25,1.2cm) {$j$};

   \node[font=\footnotesize,color=black,anchor=west] at (6.1cm,0.5cm) {$B$};
   \foreach \x/\val in {1/0.10,2/0.15,3/0.20,4/0.20,5/0.30,6/0.40,7/0.40,8/0.50,9/0.50,10/0.50}
   {
      \draw[colsmall!50!black,fill=colsmall] (6.45cm + \x cm*0.25,0.0cm) --
                (6.7cm + \x cm*0.25,0.0cm) -- (6.7cm + \x cm*0.25,\val cm) --
                (6.45cm + \x cm*0.25,\val cm) -- (6.45cm + \x cm*0.25,0.0cm);
   }
   \foreach \x/\val in {
	11/0.60,12/0.65,13/0.65,14/0.65,15/0.65,16/0.65,17/0.70,
	18/0.75,19/0.75,20/0.80,21/0.85,22/0.85,23/0.90,24/0.95}
   {
      \draw[collarge!50!black,fill=collarge] (6.45cm + \x cm*0.25,0.0cm) --
                (6.7cm + \x cm*0.25,0.0cm) -- (6.7cm + \x cm*0.25,\val cm) --
                (6.45cm + \x cm*0.25,\val cm) -- (6.45cm + \x cm*0.25,0.0cm);
   }
   \draw[black,fill=none] (6.7cm,0.0cm) -- (12.7cm,0.0cm) -- (12.7cm,1.0cm) --
			   (6.7cm,1.0cm) -- (6.7cm,0.0cm);
   \draw[colranks,thick] (6.7cm + 10cm*0.25,1.2cm) -- (6.7cm + 10cm*0.25,0.0cm);
   \node[font=\footnotesize,color=black,anchor=south] at (6.7cm + 10cm*0.25,1.2cm) {$k$};

   \draw[fill=colcloud] (5.6cm,-0.5cm) -- (4.8cm,-1.5cm) -- (8.6cm,-1.5cm) -- (7.8cm,-0.5cm)
			-- (5.6cm,-0.5cm);
   \node[fill=none] at (6.7cm,-1.0cm) {co-ranking};
   \draw[latex-,thick] (0.7cm + 12cm*0.25,0.0cm) .. controls +(0.0cm,-0.7cm) and +(-0.7cm,0.0cm) .. (5.2cm,-1.0cm);
   \draw[latex-,thick] (6.7cm + 10cm*0.25,0.0cm) .. controls +(0.0cm,-0.7cm) and +(0.7cm,0.0cm) .. (8.2cm,-1.0cm);
   \draw[latex-,thick] (6.7cm,-1.5cm) -- (1.45cm + 22cm*0.25,-2.0cm);

   \node[font=\footnotesize,color=black,anchor=west] at (0.85cm,-2.5cm) {$C$};
   \foreach \x/\val in {
	1/0.05,2/0.10,3/0.10,4/0.15,5/0.15,6/0.15,7/0.20,8/0.20,9/0.20,
	10/0.30,11/0.30,12/0.40,13/0.40,14/0.40,15/0.40,16/0.40,17/0.50,
	18/0.50,19/0.50,20/0.50,21/0.60,22/0.60}
   {
      \draw[colsmallC!50!black,fill=colsmallC] (1.2cm + \x cm*0.25,-3.0cm) --
                (1.45cm + \x cm*0.25,-3.0cm) -- (1.45cm + \x cm*0.25,\val cm -3.0cm) --
                (1.2cm + \x cm*0.25,\val cm -3.0cm) -- (1.2cm + \x cm*0.25,-3.0cm);
   }
   \foreach \x/\val in {
	23/0.60,24/0.65,25/0.65,26/0.65,27/0.65,28/0.65,29/0.65,
	30/0.70,31/0.70,32/0.75,33/0.75,34/0.75,35/0.80,36/0.80,
	37/0.85,38/0.85,39/0.90,40/0.95,41/0.95,42/0.95}
   {
      \draw[collargeC!50!black,fill=collargeC] (1.2cm + \x cm*0.25,-3.0cm) --
                (1.45cm + \x cm*0.25,-3.0cm) -- (1.45cm + \x cm*0.25,\val cm -3.0cm) --
                (1.2cm + \x cm*0.25,\val cm -3.0cm) -- (1.2cm + \x cm*0.25,-3.0cm);
   }
   \draw[black,fill=none] (1.45cm,-3.0cm) -- (11.95cm,-3.0cm) --
		(11.95cm,-2.0cm) -- (1.45cm,-2.0cm) -- (1.45cm,-3.0cm);
   \draw[colranks,thick] (1.45cm + 22cm*0.25,-2.0cm) -- (1.45cm + 22cm*0.25,-3.2cm);
   \node[font=\footnotesize,color=black,anchor=north] at (1.45cm + 22cm*0.25,-3.2cm) {$i$};

\end{tikzpicture}

   \end{center}
   \caption{Co-ranking determines for any given rank (index) $i$ in
     $C$ the co-ranks $j$ and $k$ in $A$ and $B$ without having
     to actually merge $A$ and $B$.}
   \label{fig:coranking}
\end{figure}
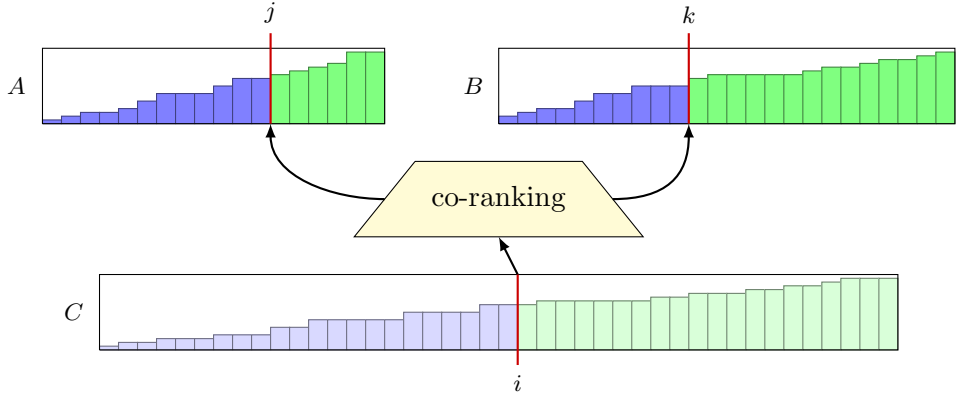

Stability means that all equal elements of $A$ should (in their
relative order in $A$) come before equal elements of $B$ (also in
their relative order in $B$). That is, if $C[i]=A[j]$ and
$A[j]=A[j+1]$ then also $C[i+1]=A[j+1]$, and if $A[j]=B[k]$ then there
is some $i'>i$ for which $C[i']=B[k]$.  Stability is mostly easy to
guarantee for sequential merging.  We can therefore assume that we
have an optimal sequential algorithm for stable merging at our
disposal.

Let $C[i-1]$ be the $i^{\rm th}$ element in the stably merged output array
$C$. For determining the co-ranks $j$ and $k$ (and thereby determining
whether $C[i-1]$ comes from the $A$ or the $B$ array), we first note
that $j+k=i$. The $i^{\rm th}$ output element $C[i-1]$ is either $A[j-1]$ or
$B[k-1]$, both elements $A[j-1]$ and $B[k-1]$ are in the prefix
$C[0,\ldots,i-1]$, but neither of $A[j]$ nor $B[k]$ are. (For
convenience we assume that $A[-1]=-\infty, A[m]=\infty$, and
likewise for $B$. However, these sentinels do not have to be stored.) If
$C[i-1]=A[j-1]$ (that is, the $i^{\rm th}$ output element comes from $A$),
then it must hold that $A[j-1]\leq B[k]$.  If instead $C[i-1]=B[k-1]$,
then likewise $B[k-1]\leq A[j]$. Now, since the merge is stable, it
cannot be that $B[k-1]=A[j]$ since that would mean that an element of
$B$ equal to an element of $A$ comes before the $A$ element in the
output array. Therefore, in this case $B[k-1]<A[j]$.

\begin{lemma}
\label{lem:coranks}
For any $i, 0\leq i<m+n$, there exists a unique $j$, $0\leq j\leq m$, and
a unique $k$, $0\leq k\leq n$, with $j+k=i$ such that
\begin{enumerate}
\item
\label{lem:cond1}
$j=0\vee A[j-1]\leq B[k]$ and 
\item
\label{lem:cond2}
$k=0\vee B[k-1]<A[j]$. 
\end{enumerate}
These $j$ and $k$ fulfill
$\merge(A[0,\ldots,j-1],B[0,\ldots,k-1],\leq)=C[0,\ldots,i-1]$ where
$C=\merge(A,B,\leq)$.
\end{lemma}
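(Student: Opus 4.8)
The plan is to treat the two requirements as constraints on a single free variable. Since the constraint $j+k=i$ forces $k=i-j$, I will parametrise everything by $j$ ranging over the admissible interval $\max(0,i-n)\leq j\leq\min(i,m)$ (which is exactly the set of $j$ for which both $0\leq j\leq m$ and $0\leq k\leq n$), and rewrite condition~\ref{lem:cond1} as the predicate $P(j): A[j-1]\leq B[i-j]$ and condition~\ref{lem:cond2} as $Q(j): B[i-j-1]<A[j]$, using the stated sentinels $A[-1]=B[-1]=-\infty$ and $A[m]=B[n]=\infty$ so that the boundary clauses $j=0$ and $k=0$ are absorbed into the inequalities. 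The whole lemma then becomes: exactly one admissible $j$ satisfies $P(j)\wedge Q(j)$, and that $j$ induces the claimed prefix identity.

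For uniqueness I would argue directly, without any monotonicity machinery. Suppose two admissible pairs $(j,k)$ and $(j',k')$ both satisfy the conditions, and assume $j<j'$, hence $k>k'$. Then $j'\geq1$ and $k\geq1$, so condition~\ref{lem:cond1} for $(j',k')$ gives $A[j'-1]\leq B[k']$ and condition~\ref{lem:cond2} for $(j,k)$ gives $B[k-1]<A[j]$. Using $j\leq j'-1$ and $k'\leq k-1$ together with the sortedness of $A$ and $B$, I can chain $B[k-1]<A[j]\leq A[j'-1]\leq B[k']\leq B[k-1]$, a contradiction. So at most one admissible pair works.

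Existence is the step I expect to carry the real weight. First I note the two boundary facts: $P$ holds at the left end $j=\max(0,i-n)$ and $Q$ holds at the right end $j=\min(i,m)$ (each follows from one sentinel or one boundary clause). The crux is then a single observation relating the two conditions: negating $Q(j)$ gives $A[j]\leq B[i-j-1]$, which is precisely $P(j+1)$. Hence I define $j^{\ast}$ to be the smallest admissible $j$ with $Q(j^{\ast})$ true (this exists because $Q$ holds at the right end). If $j^{\ast}$ is the left end, $P(j^{\ast})$ holds by the boundary fact; otherwise $Q(j^{\ast}-1)$ is false by minimality, and the observation turns this failure into exactly $P(j^{\ast})$. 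Either way $P(j^{\ast})\wedge Q(j^{\ast})$ holds, proving existence; combined with the uniqueness argument this pins down $(j,k)=(j^{\ast},i-j^{\ast})$.

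It remains to establish the prefix identity for this $(j,k)$. My plan is to show that in $C=\merge(A,B,\leq)$ every consumed element (one of $A[0],\dots,A[j-1]$ or $B[0],\dots,B[k-1]$) precedes every unconsumed element. The only non-trivial cross-comparisons are handled by the two conditions: a consumed $A[a]$ with $a\leq j-1$ satisfies $A[a]\leq A[j-1]\leq B[k]\leq B[b']$ for any unconsumed $B[b']$ with $b'\geq k$ (using condition~\ref{lem:cond1}), and a consumed $B[b]$ with $b\leq k-1$ satisfies $B[b]\leq B[k-1]<A[j]\leq A[a']$ for any unconsumed $A[a']$ with $a'\geq j$ (using condition~\ref{lem:cond2}); the two same-array comparisons are immediate from sortedness and stability. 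Since exactly $i$ elements are consumed, they must occupy positions $0,\dots,i-1$ of $C$, and the restriction of the stable order of $C$ to them is itself a stable merge of $A[0,\dots,j-1]$ and $B[0,\dots,k-1]$; by uniqueness of the stable merge this restriction equals $\merge(A[0,\dots,j-1],B[0,\dots,k-1],\leq)$, giving the claimed equality with $C[0,\dots,i-1]$. The main obstacle throughout is bookkeeping the admissible range and sentinels so that the boundary instances of $P$ and $Q$ and the negation identity all remain valid at the ends of the interval.
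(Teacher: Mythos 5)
Your proof is correct, and its skeleton matches the paper's: both establish uniqueness by a contradiction exploiting sortedness, and existence by an extremal choice that exploits the fact that failure of one condition at an adjacent index is exactly the other condition at the current index. Your version is the mirror image of the paper's (you take the \emph{smallest} $j$ satisfying the second condition and derive the first from minimality via $\neg Q(j)\equiv P(j+1)$; the paper takes the \emph{largest} $j$ satisfying the first condition and derives the second from maximality). Beyond this duality, your write-up is tighter in two respects. First, your uniqueness argument handles two arbitrary admissible pairs directly with the chain $B[k-1]<A[j]\leq A[j'-1]\leq B[k']\leq B[k-1]$, whereas the paper only checks the neighbors $j\pm 1$ and appeals to orderedness for all other indices; your sentinel-based parametrization over the admissible interval $\max(0,i-n)\leq j\leq\min(i,m)$ also absorbs the boundary clauses $j=0$ and $k=0$ uniformly, where the paper treats them as somewhat informal special cases (phrased in terms of whether a prefix of one array ``comes before'' the other in the output). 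Second, you actually prove the final claim of the lemma---that the unique $(j,k)$ satisfies $\merge(A[0,\ldots,j-1],B[0,\ldots,k-1],\leq)=C[0,\ldots,i-1]$---by showing every consumed element precedes every unconsumed element in $C$ (the two cross-array cases following from the two conditions, with stability resolving the tie in the $A$-versus-$B$ direction and strictness resolving the $B$-versus-$A$ direction) and then invoking uniqueness of the stable merge; the paper's proof omits this step entirely, relying on the informal discussion preceding the lemma. What your approach buys is rigor and completeness at the boundaries; what the paper's buys is brevity.
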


We will refer to (\ref{lem:cond1}) and (\ref{lem:cond2}) as the first
and the second Lemma condition, respectively.

\begin{proof}
We need to argue for the uniqueness of $j$ and $k$. Assume the two
conditions hold for some $0\leq j<m$ with a corresponding $k$ such
that $j+k=i$.  We look at $j+1$ and $j-1$: for $j+1$, $A[j]\leq
B[k-1]$ contradicts $B[k-1]<A[j]$, and for $j-1$, $B[k]<A[j-1]$
contradicts $A[j-1]\leq B[k]$. Thus neither $j-1$ nor $j+1$ can
fulfill both conditions, and since the arrays are ordered, neither can
any other smaller or larger indices. For the existence we argue as
follows. Let $j$ with $0<j<m$ be the highest indexed element such that
$A[j-1]\leq B[k]$ with a corresponding $k, 0<k<n$; since $A[j]>B[k-1]$
the second Lemma condition is also fulfilled. If no such element
exists, we check for $j=0$ and $k=0$. For $j=0$ the first lemma
condition is trivially true, and the second lemma condition
$B[k-1]<A[0]$ holds if the length $k$ prefix of $B$ comes
before $A$ in the output array. For $k=0$ the second lemma condition
is trivially true, and the first lemma condition $A[j-1]\leq B[0]$
holds if the length $j$ prefix of $A$ comes before $B$ in the output array.
Since either must be true, indices $j$ and $k$ with $j+k=i$ must exist
as claimed.
\end{proof}

\begin{algorithm}
\caption{$\corank(i,A,m,B,n,\leq)$ determines for any given $i, 0\leq i<m+n$
	 the unique co-ranks $j$ and $k$ in arrays $A$ and $B$.}
\label{alg:corank}
\begin{algorithmic}[1]
\STATE $j\gets \min(i,m)$ 
\STATE $k\gets i-j$ \COMMENT{Invariant: $j+k=i$}
\STATE $j_{\rm low} \gets \max(0,i-n)$
\STATE $\mbox{active} \gets \TRUE$
\WHILE {active}
\IF{$j>0\wedge k<n\ \cand\ A[j-1]>B[k]$ }
\STATE \COMMENT{First Lemma condition violated: decrease $j$}
\STATE $\delta \gets \ceiling{\frac{j-j_{\rm low}}{2}}$
\STATE $k_{\rm low} \gets k$
\STATE $j,k\gets j-\delta,k+\delta$
\ELSIF{$k>0\wedge j<m\ \cand\ B[k-1]\geq A[j]$ }
\STATE \COMMENT{Second Lemma condition violated: decrease $k$}
\STATE $\delta \gets \ceiling{\frac{k-k_{\rm low}}{2}}$
\STATE $j_{\rm low} \gets j$
\STATE $j,k\gets j+\delta,k-\delta$
\ELSE
\STATE \COMMENT{No conditions violated: unique $(j,k)$ found}
\STATE $\mbox{active} \gets \FALSE$
\ENDIF
\ENDWHILE
\RETURN $(j,k)$
\end{algorithmic}
\end{algorithm}

Lemma~\ref{lem:coranks} immediately gives an approach to find the
co-ranks $j$ and $k$ for any $i$ efficiently.
Algorithm~\ref{alg:corank} implements this approach and finds the
unique $j$ and $k$ fulfilling both conditions of
Lemma~\ref{lem:coranks}.  It maintains the invariant $j+k=i$, and
works similar to a binary search.  The $\cand$ in lines 6 and 11
denotes a ``conditional and'' as in the C language and means that the
condition at the right is only evaluated if the condition on the left
evaluates to true. The algorithm starts with the extreme assumption
that all $i$ elements come from $A$ (as far as possible), that is
setting $j=\min(i,m)$ and $k=i-j$.  For both arrays $A$ and $B$ it
maintains lower bounds $j_{\rm low}$, $j_{\rm low} \leq j$, and
$k_{\rm low}$, $k_{\rm low} \leq k$ on the prefix lengths.
Furthermore, it maintains the invariant that either the unique index
in array $A$ fulfilling the conditions of Lemma~\ref{lem:coranks} is
between $j_{\rm low}$ and $j$ \emph{or} the unique index in array $B$
fulfilling the condition is between $k_{\rm low}$ and $k$.

If the first Lemma condition is violated, that is $A[j-1]>B[k]$, then
$j$ is too large and therefore must be decreased. This is done by
cutting the size of the interval from $j_{\rm low}$ to $j$ in half. In
order to maintain the invariant $j+k=i$, $k$ is correspondingly
increased at the same time. In order to prevent that $k$ exceeds $n$,
the lower bound $j_{\rm low}$ is initially set to $\max(0,i-n)$.  When
$j$ is decreased, the lower bound $k_{\rm low}$ can be increased to
$k$ because no smaller $k$ could fulfill the first Lemma condition
(that would mean a larger $j$, which cannot be). If instead the second
Lemma condition is violated, that is $B[k-1]\geq A[j]$, then $k$ needs
to be decreased. Again, we do this by halving the size of the interval
from $k_{\rm low}$ to $k$. Since no smaller $j$ can now fulfill the
first Lemma condition, the lower index $j_{\rm low}$ can be increased
to $j$. The algorithm terminates when both conditions are fulfilled,
which will happen at the latest when either $j-j_{\rm low}=0$ or
$k-k_{\rm low}=0$.  Note that in the first iteration, only the first
Lemma condition can be violated, as per initialization either $j=m$ or
$k=0$. The lower bound $k_{\rm low}$ will therefore be set in the
first iteration.  A first few possible iterations of the algorithm are
illustrated in Figure~\ref{fig:searching}.

\begin{figure}
\begin{center}
\begin{tikzpicture}
   \colorlet{colbars}{blue!50!white}	
   \colorlet{colbarsoff}{gray!50}	
   \colorlet{colranks}{red!80!black}	

   \node[font=\footnotesize,color=black,anchor=west] at (0.0cm,1.4cm)
			{Before:};

   \node[font=\footnotesize,color=black,anchor=west] at (0.1cm,0.5cm) {$A$};
   \foreach \x/\val in {
	1/0.05,2/0.10,3/0.15,4/0.15,5/0.20,6/0.30,7/0.40,8/0.40,9/0.40,
	10/0.50,11/0.60,12/0.60,13/0.65,14/0.70,15/0.75,16/0.80,17/0.95,
	18/0.95}
   {
      \draw[colbars!50!black,fill=colbars] (0.45cm + \x cm*0.25,0.0cm) --
		(0.7cm + \x cm*0.25,0.0cm) -- (0.7cm + \x cm*0.25,\val cm) --
		(0.45cm + \x cm*0.25,\val cm) -- (0.45cm + \x cm*0.25,0.0cm);
   }
   \draw[black,fill=none] (0.7cm,0.0cm) -- (5.2cm,0.0cm) -- (5.2cm,1.0cm) --
			   (0.7cm,1.0cm) -- (0.7cm,0.0cm);
   \node[font=\footnotesize,color=black,anchor=south] at (5.2cm,1.0cm) {$m$};
   \draw[colranks,thick] (0.7cm + 0cm*0.25,1.0cm) -- (0.7cm + 0cm*0.25,-0.2cm);
   \node[font=\footnotesize,color=black,anchor=north] at (0.7cm + 0cm*0.25,-0.2cm) {$j_{\rm low}$};
   \draw[colranks,thick] (0.7cm + 18cm*0.25,1.0cm) -- (0.7cm + 18cm*0.25,-0.2cm);
   \node[font=\footnotesize,color=black,anchor=north] at (0.7cm + 18cm*0.25,-0.2cm) {$j$};

   \node[font=\footnotesize,color=black,anchor=west] at (6.1cm,0.5cm) {$B$};
   \foreach \x/\val in {
	1/0.10,2/0.15,3/0.20,4/0.20,5/0.30,6/0.40,7/0.40,8/0.50,9/0.50,
	10/0.50,11/0.60,12/0.65,13/0.65,14/0.65,15/0.65,16/0.65,17/0.70,
	18/0.75,19/0.75,20/0.80,21/0.85,22/0.85,23/0.90,24/0.95}
   {
      \draw[colbars!50!black,fill=colbars] (6.45cm + \x cm*0.25,0.0cm) --
                (6.7cm + \x cm*0.25,0.0cm) -- (6.7cm + \x cm*0.25,\val cm) --
                (6.45cm + \x cm*0.25,\val cm) -- (6.45cm + \x cm*0.25,0.0cm);
   }
   \draw[black,fill=none] (6.7cm,0.0cm) -- (12.7cm,0.0cm) -- (12.7cm,1.0cm) --
			   (6.7cm,1.0cm) -- (6.7cm,0.0cm);
   \node[font=\footnotesize,color=black,anchor=south] at (12.7cm,1.0cm) {$n$};
   \draw[colranks,thick] (6.7cm + 4cm*0.25,1.0cm) -- (6.7cm + 4cm*0.25,-0.2cm);
   \node[font=\footnotesize,color=black,anchor=north] at (6.7cm + 4cm*0.25,-0.2cm) {$k=i-m$};

   \node[font=\footnotesize,color=black,anchor=west] at (0.0cm,-1.1cm)
			{Iteration 1 (First Lemma condition violated):};

   \node[font=\footnotesize,color=black,anchor=west] at (0.1cm,-2.0cm) {$A$};
   \foreach \x/\val in {
	1/0.05,2/0.10,3/0.15,4/0.15,5/0.20,6/0.30,7/0.40,8/0.40,9/0.40}
   {
      \draw[colbars!50!black,fill=colbars] (0.45cm + \x cm*0.25,-2.5cm) --
		(0.7cm + \x cm*0.25,-2.5cm) -- (0.7cm + \x cm*0.25,\val cm -2.5cm) --
		(0.45cm + \x cm*0.25,\val cm -2.5cm) -- (0.45cm + \x cm*0.25,-2.5cm);
   }
   \foreach \x/\val in {
	10/0.50,11/0.60,12/0.60,13/0.65,14/0.70,15/0.75,16/0.80,17/0.95,18/0.95}
   {
      \draw[colbarsoff!50!black,fill=colbarsoff] (0.45cm + \x cm*0.25,-2.5cm) --
		(0.7cm + \x cm*0.25,-2.5cm) -- (0.7cm + \x cm*0.25,\val cm -2.5cm) --
		(0.45cm + \x cm*0.25,\val cm -2.5cm) -- (0.45cm + \x cm*0.25,-2.5cm);
   }
   \draw[black,fill=none] (0.7cm,-2.5cm) -- (5.2cm,-2.5cm) -- (5.2cm,-1.5cm) --
			   (0.7cm,-1.5cm) -- (0.7cm,-2.5cm);
   \draw[colranks,thick] (0.7cm + 0cm*0.25,-1.5cm) -- (0.7cm + 0cm*0.25,-2.7cm);
   \node[font=\footnotesize,color=black,anchor=north] at (0.7cm + 0cm*0.25,-2.7cm) {$j_{\rm low}$};
   \draw[colranks,thick] (0.7cm + 9cm*0.25,-1.5cm) -- (0.7cm + 9cm*0.25,-2.7cm);
   \node[font=\footnotesize,color=black,anchor=north] at (0.7cm + 9cm*0.25,-2.7cm) (it1j) {$j$};
   \node[font=\footnotesize,color=black] at (0.7cm + 13.5cm*0.25,-5cm |- it1j) 
	(it1delta) {$\delta$};
   \draw[black,latex-] (it1j.east) -- (it1delta.west);
   \draw[black] (it1delta.east) -- (0.7cm + 18cm*0.25,-5cm |- it1j);

   \node[font=\footnotesize,color=black,anchor=west] at (6.1cm,-2.0cm) {$B$};
   \foreach \x/\val in {
	5/0.30,6/0.40,7/0.40,8/0.50,9/0.50,   10/0.50,11/0.60,12/0.65,13/0.65}
   {
      \draw[colbars!50!black,fill=colbars] (6.45cm + \x cm*0.25,-2.5cm) --
                (6.7cm + \x cm*0.25,-2.5cm) -- (6.7cm + \x cm*0.25,\val cm - 2.5cm) --
                (6.45cm + \x cm*0.25,\val cm - 2.5cm) -- (6.45cm + \x cm*0.25,-2.5cm);
   }
   \foreach \x/\val in {
	1/0.10,2/0.15,3/0.20,4/0.20,   14/0.65,15/0.65,16/0.65,17/0.70,
	18/0.75,19/0.75,20/0.80,21/0.85,22/0.85,23/0.90,24/0.95}
   {
      \draw[colbarsoff!50!black,fill=colbarsoff] (6.45cm + \x cm*0.25,-2.5cm) --
                (6.7cm + \x cm*0.25,-2.5cm) -- (6.7cm + \x cm*0.25,\val cm - 2.5cm) --
                (6.45cm + \x cm*0.25,\val cm - 2.5cm) -- (6.45cm + \x cm*0.25,-2.5cm);
   }
    \draw[black,fill=none] (6.7cm,-2.5cm) -- (12.7cm,-2.5cm) -- (12.7cm,-1.5cm) --
			   (6.7cm,-1.5cm) -- (6.7cm,-2.5cm);
   \draw[colranks,thick] (6.7cm + 4cm*0.25,-1.5cm) -- (6.7cm + 4cm*0.25,-2.7cm);
   \node[font=\footnotesize,color=black,anchor=north] at (6.7cm + 4cm*0.25,-2.7cm)
	(it1klow) {$k_{\rm low}$};
   \draw[colranks,thick] (6.7cm + 13cm*0.25,-1.5cm) -- (6.7cm + 13cm*0.25,-2.7cm);
   \node[font=\footnotesize,color=black,anchor=north] at (6.7cm + 13cm*0.25,-2.7cm) (it1k) {$k$};
   \node[font=\footnotesize,color=black] at (6.7cm + 8.5cm*0.25,-5cm |- it1j) 
	(it1delta2) {$\delta$};
   \draw[black] (it1klow.east |- it1j) -- (it1delta2.west);
   \draw[black,-latex] (it1delta2.east) -- (it1k.west |- it1j);

   \node[font=\footnotesize,color=black,anchor=west] at (0.0cm,-3.6cm)
			{Iteration 2 (Second Lemma condition violated):};

   \node[font=\footnotesize,color=black,anchor=west] at (0.1cm,-4.5cm) {$A$};
   \foreach \x/\val in {
	10/0.50,11/0.60,12/0.60,13/0.65,14/0.70}
   {
      \draw[colbars!50!black,fill=colbars] (0.45cm + \x cm*0.25,-5.0cm) --
		(0.7cm + \x cm*0.25,-5.0cm) -- (0.7cm + \x cm*0.25,\val cm -5.0cm) --
		(0.45cm + \x cm*0.25,\val cm -5.0cm) -- (0.45cm + \x cm*0.25,-5.0cm);
   }
   \foreach \x/\val in {
	1/0.05,2/0.10,3/0.15,4/0.15,5/0.20,6/0.30,7/0.40,8/0.40,9/0.40,
	15/0.75,16/0.80,17/0.95,18/0.95}
   {
      \draw[colbarsoff!50!black,fill=colbarsoff] (0.45cm + \x cm*0.25,-5.0cm) --
		(0.7cm + \x cm*0.25,-5.0cm) -- (0.7cm + \x cm*0.25,\val cm -5.0cm) --
		(0.45cm + \x cm*0.25,\val cm -5.0cm) -- (0.45cm + \x cm*0.25,-5.0cm);
   }
   \draw[black,fill=none] (0.7cm,-5.0cm) -- (5.2cm,-5.0cm) -- (5.2cm,-4.0cm) --
			   (0.7cm,-4.0cm) -- (0.7cm,-5.0cm);
   \draw[colranks,thick] (0.7cm + 9cm*0.25,-4.0cm) -- (0.7cm + 9cm*0.25,-5.2cm);
   \node[font=\footnotesize,color=black,anchor=north] at (0.7cm + 9cm*0.25,-5.2cm)
	(it2jlow) {$j_{\rm low}$};
   \draw[colranks,thick] (0.7cm + 14cm*0.25,-4.0cm) -- (0.7cm + 14cm*0.25,-5.2cm);
   \node[font=\footnotesize,color=black,anchor=north] at (0.7cm + 14cm*0.25,-5.2cm) (it2j) {$j$};
   \node[font=\footnotesize,color=black] at (0.7cm + 11.5cm*0.25,-5cm |- it2j) 
	(it2delta) {$\delta$};
   \draw[black] (it2jlow.east |- it2j) -- (it2delta.west |- it2j);
   \draw[black,-latex] (it2delta.east |- it2j) -- (it2j.west);

   \node[font=\footnotesize,color=black,anchor=west] at (6.1cm,-4.5cm) {$B$};
   \foreach \x/\val in {
	5/0.30,6/0.40,7/0.40,8/0.50}
   {
      \draw[colbars!50!black,fill=colbars] (6.45cm + \x cm*0.25,-5.0cm) --
                (6.7cm + \x cm*0.25,-5.0cm) -- (6.7cm + \x cm*0.25,\val cm - 5.0cm) --
                (6.45cm + \x cm*0.25,\val cm - 5.0cm) -- (6.45cm + \x cm*0.25,-5.0cm);
   }
   \foreach \x/\val in {
	1/0.10,2/0.15,3/0.20,4/0.20,
	9/0.50,10/0.50,11/0.60,12/0.65,13/0.65,14/0.65,15/0.65,16/0.65,
	17/0.70,18/0.75,19/0.75,20/0.80,21/0.85,22/0.85,23/0.90,24/0.95}
   {
      \draw[colbarsoff!50!black,fill=colbarsoff] (6.45cm + \x cm*0.25,-5.0cm) --
                (6.7cm + \x cm*0.25,-5.0cm) -- (6.7cm + \x cm*0.25,\val cm - 5.0cm) --
                (6.45cm + \x cm*0.25,\val cm - 5.0cm) -- (6.45cm + \x cm*0.25,-5.0cm);
   }
    \draw[black,fill=none] (6.7cm,-5.0cm) -- (12.7cm,-5.0cm) -- (12.7cm,-4.0cm) --
			   (6.7cm,-4.0cm) -- (6.7cm,-5.0cm);
   \draw[colranks,thick] (6.7cm + 4cm*0.25,-4.0cm) -- (6.7cm + 4cm*0.25,-5.2cm);
   \node[font=\footnotesize,color=black,anchor=north] at (6.7cm + 4cm*0.25,-5.2cm)
	(it2klow) {$k_{\rm low}$};
   \draw[colranks,thick] (6.7cm + 8cm*0.25,-4.0cm) -- (6.7cm + 8cm*0.25,-5.2cm);
   \node[font=\footnotesize,color=black,anchor=north] at (6.7cm + 8cm*0.25,-5.2cm) (it2k) {$k$};
   \node[font=\footnotesize,color=black] at (6.7cm + 10.5cm*0.25,-5cm |- it2j) 
	(it2delta2) {$\delta$};
   \draw[black,latex-] (it2k.east |- it2j) -- (it2delta2.west);
   \draw[black] (it2delta2.east) -- (6.7cm + 13cm*0.25, 5cm |- it2j);

   \node[font=\footnotesize,color=black,anchor=west] at (0.0cm,-6.1cm) {$\dots$};

\end{tikzpicture}
\end{center}
\caption{A first few possible iterations of the co-ranking
  algorithm. In the first iteration only the first condition can be
  true, so the interval from $j_{\rm low}$ to $j$ is halved. The lower bound
  $k_{\rm low}$ is set accordingly, and $k$ is increased by the same $\delta$. In
  the second iteration, assuming the second condition to be true,
  instead the interval from $k_{\rm low}$ to $k$ is halved. 
}
\label{fig:searching}
\end{figure}
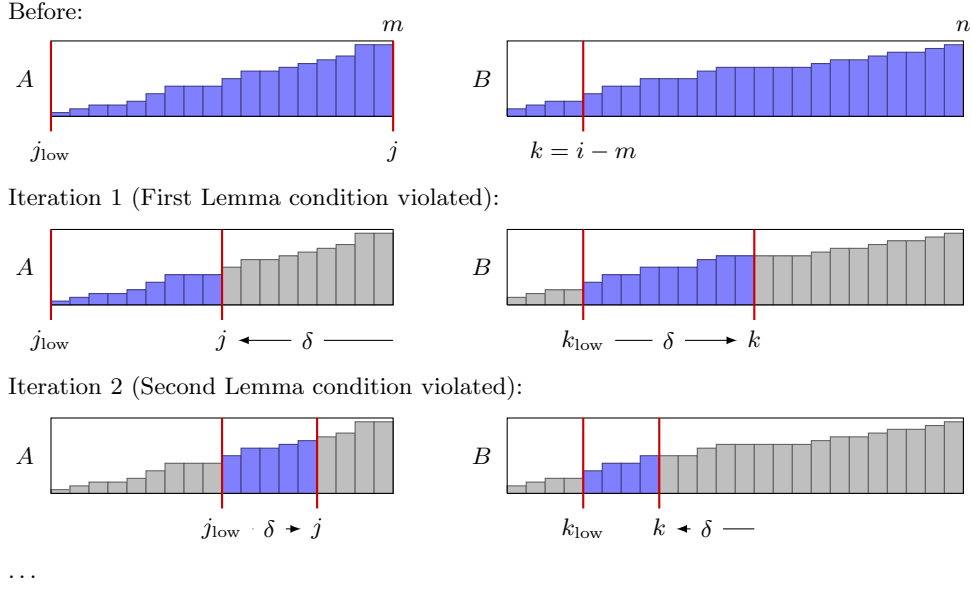

\begin{proposition}
Algorithm~\ref{alg:corank} computes the co-ranks $j$ and $k$
for ordered arrays $A$ and $B$ with $m$ and $n$ elements for any
$0\leq i\leq m+n$. The algorithm requires at most
$\ceiling{\log_2\min(m,n,i,m+n-i)}$ iterations and thus comparisons
of element keys.
\end{proposition}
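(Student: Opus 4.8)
The plan is to separate the two claims---that the algorithm returns the correct co-ranks, and that it runs in the stated number of iterations---and to treat the loop as a binary search that is kept honest by two invariants. For \textbf{correctness}, I would first record the trivial invariant $j+k=i$ of lines 1--2: every branch changes $j$ by $-\delta$ and $k$ by $+\delta$ (or the reverse), so the sum is preserved. The substantive invariant is the containment property: writing $(j^*,k^*)$ for the unique pair guaranteed by Lemma~\ref{lem:coranks}, I would maintain $j_{\rm low}\le j^*$ and $k_{\rm low}\le k^*$, equivalently $j^*\in[j_{\rm low},\,i-k_{\rm low}]$. The key step is to translate a violated condition into a strict bound on $j^*$: because both arrays are ordered and $k=i-j$, the first condition $A[j-1]>B[k]$ holds exactly when $j>j^*$, and symmetrically $B[k-1]\ge A[j]$ holds exactly when $j<j^*$; this is the same monotonicity argument used in the proof of Lemma~\ref{lem:coranks}. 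Given this, when the first branch fires we know $j^*\le j-1$, so setting $k_{\rm low}\gets k$ is legitimate (since $k^*=i-j^*\ge k+1>k$), and a short case split on whether $j^*\le j-\delta$ shows the decrease of $j$ re-establishes the invariant; the second branch is symmetric. At termination neither condition is violated, so both Lemma conditions hold at $(j,k)$, and the uniqueness part of Lemma~\ref{lem:coranks} forces $(j,k)=(j^*,k^*)$.

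For the \textbf{iteration bound} I would track the two interval lengths $d_j=j-j_{\rm low}$ and $d_k=k-k_{\rm low}$. A direct computation of the updates gives, in the first branch, $d_j\mapsto\floor{d_j/2}$ together with $d_k\mapsto\delta=\ceiling{d_j/2}$, and symmetrically $d_k\mapsto\floor{d_k/2}$, $d_j\mapsto\ceiling{d_k/2}$ in the second. Since $j+k=i$, the length of the interval that provably contains $j^*$, namely $(i-k_{\rm low})-j_{\rm low}=d_j+d_k$, is at least halved at every step. No infinite looping can occur: once $d_j=0$ we have $j=j_{\rm low}\le j^*$, so the first branch cannot fire, and symmetrically for $d_k$, so the loop stops precisely when this length reaches $0$.

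The starting value of that length is handled by the identity
\[
  \min(i,m)-\max(0,i-n)=\min(m,n,i,m+n-i),
\]
which I would verify by the four-way case analysis on $i\le m$ versus $i>m$ and $i\le n$ versus $i>n$; in each case both sides collapse to the same one of the four quantities. Because the first iteration can only trigger the first branch---initially $j=m$ or $k=0$, so the second condition cannot be violated---this is also the step that first sets $k_{\rm low}$, after which the clean halving recurrence above applies; unrolling it yields a bound logarithmic in $\min(m,n,i,m+n-i)$.

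I expect the main obstacle to be pinning down the \emph{exact} rounding constant rather than the overall structure. The halving argument immediately gives that the number of iterations is $\Theta(\log\min(m,n,i,m+n-i))$, but matching the stated $\ceiling{\log_2\min(m,n,i,m+n-i)}$ to the bit requires care with the ceilings hidden in $\delta$, with the coupling of the two intervals through $j+k=i$, and with whether the terminating pass---which performs a final comparison but no reduction---is counted as an iteration. In particular one must analyse the worst case, in which $j^*$ sits at an endpoint of the initial interval, with the correct off-by-one, since a naive ``interval of $N+1$ candidate positions'' count would instead suggest $\ceiling{\log_2(\min(m,n,i,m+n-i)+1)}$; reconciling this with the claimed bound is where the delicate bookkeeping lies.
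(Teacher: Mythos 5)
Your plan follows the paper's own proof in all essentials: the invariant $j+k=i$; a containment invariant for the search range (your conjunctive form $j_{\rm low}\le j^*$ and $k_{\rm low}\le k^*$ is equivalent to the paper's disjunctive ``co-rank of $A$ in $[j_{\rm low},j]$ or co-rank of $B$ in $[k_{\rm low},k]$,'' since $j$ always lies in $[j_{\rm low},i-k_{\rm low}]$); the identity $\min(i,m)-\max(0,i-n)=\min(m,n,i,m+n-i)$; the remark that only the first branch can fire in the first iteration; and a halving recurrence on the interval lengths. Your correctness half is complete, and the monotone characterization (the first condition is violated exactly when $j>j^*$, the second exactly when $j<j^*$, and never both at once by orderedness) is a crisper version of what the paper argues. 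One small repair: to rule out the loop cycling between $(d_j,d_k)=(1,0)$ and $(0,1)$ --- ceiling-halving does not shrink a length-one interval --- you need the \emph{strict} information that a firing of the first branch gives $k^*>k_{\rm low}$ (and of the second, $j^*>j_{\rm low}$); your non-strict lower-bound invariant by itself does not force termination there.

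The obstacle you flag at the end is real, and you should not try to overcome it: the constant in the Proposition is not attainable, and your ``naive'' count is the correct one. Take $A=[3,4]$, $B=[1,2]$, $i=2$, so that $N:=\min(m,n,i,m+n-i)=2$ and the claimed bound is $\ceiling{\log_2 2}=1$. The correct answer is $(j^*,k^*)=(0,2)$, and the algorithm runs $(j,k)=(2,0)\to(1,1)\to(0,2)$: two iterations execute a halving step (a third pass, with a third comparison, is then needed to verify the conditions), exceeding the claimed bound under any counting. In general the worst case is $\ceiling{\log_2(N+1)}=\floor{\log_2 N}+1$ halving iterations, which equals $\ceiling{\log_2 N}$ except when $N$ is a power of two. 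The paper's proof glosses over this in its last sentence: $\delta$ starts at $\ceiling{N/2}$, and since ceiling-halving never drives $\delta$ below $1$, ``$\delta$ is halved in each subsequent iteration'' does not by itself bound the number of iterations by $\ceiling{\log_2 N}$; the search really ranges over the $N+1$ candidate values of $j^*$ in $[\max(0,i-n),\min(i,m)]$. So complete your argument exactly as you propose and state the bound as $\ceiling{\log_2(N+1)}$; nothing downstream (the $\mathcal{O}(\log\min(m,n))$ co-ranking cost and the parallel merge bound) is affected.
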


\begin{proof}
The algorithm clearly maintains the invariant $j+k=i$. We claim
further that either the co-rank for array $A$ lies between $j_{\rm
  low}$ and $j$ or the co-rank for array $B$ lies between $k_{\rm
  low}$ and $k$. This holds before the first iteration, as $j_{\rm
  low}$ and $j$ are set assuming that as many of the $i$ output
elements as possible come from $A$ (see Figure~\ref{fig:searching}).
Assume that the invariant holds before an iteration starts.  If the
condition in Line 6 holds, then the correct index in array $A$ must be
between $j_{\rm low}$ and $j$. For the next iteration the correct
index in $A$ is either between $j_{\rm low}$ and $j-\delta$ or between
$j-\delta$ and $j$. In the latter case, the co-rank for array $B$ must
be between $k$ and $k+\delta$, and the invariant is maintained by
setting the lower bound $k_{\rm low}$ to $k$ and increasing $k$ by
$\delta$. Similarly, if instead the condition in line $11$ evaluates
to true, the correct index in $B$ must be between $k_{\rm low}$ and
$k$, and dividing the interval and increasing the lower bound $j_{\rm
  low}$ for array $A$ likewise reestablishes the invariant. It
therefore holds also after the iteration. After the first iteration,
$k-k_{\rm low}=\delta$ (if the first condition is true, otherwise the
algorithm terminates) and $j-j_{\rm low}\leq\delta$. As $j$ is
initialized to $\min(i,m)$ and $j_{\rm low}\geq 0$, trivially
$j-j_{\rm low}\leq\min(i,m)$. In order for $k$ not to exceed $n$,
$j_{\rm low}$ must be initialized such that $k+(j-j_{\rm low})<n$,
that is $i-j_{\rm low}<n$; $j_{\rm low}$ is therefore set to
$\max(0,i-n)$. Assume $j_{\rm low}=i-n\geq 0$; since $j$ is
initialized to either $m$ or $i<m$, it follows that either $j-j_{\rm
  low}\leq m+n-i$ or $j-j_{\rm low}\leq n$.  As $\delta$ is halved in
each subsequent iteration and $j-j_{\rm low}$ starts out being
$\min(m,n,i,m+n-i)$, at most $\ceiling{\log_2 \min(m,n,i,m+n-i)}$
iterations are required.
\end{proof}

\section{Parallel merging}

\begin{algorithm}
\caption{Synchronization-free parallel merging of ordered arrays $A$ and $B$ 
for processing element $r$, $0\leq r<p$}\label{alg:merge}
\begin{algorithmic}[1]
\STATE $i_r\gets \floor{r\frac{m+n}{p}}$ \COMMENT{Start index of
  output block}
\STATE $i_{r+1}\gets \floor{(r+1)\frac{m+n}{p}}$ \COMMENT{End
  index of output block}

\COMMENT{To avoid synchronization processing element $r$ computes
  co-ranks
for both start and end index}
\STATE $(j_r,k_r) \gets \corank(i_r,A,m,B,n,\leq)$
\STATE $(j_{r+1},k_{r+1}) \gets \corank(i_{r+1},A,m,B,n,\leq)$
\STATE $\merge(A[j_r,\ldots,j_{r+1}-1],B[k_r,\ldots,k_{r+1}-1],C[i_r,\ldots,i_{r+1}-1],\leq)$
\end{algorithmic}
\end{algorithm}

The co-ranking algorithm provides a simple and efficient way of
performing merging in parallel. Let $p$ processing elements be given,
all of which can access input and output arrays $A$, $B$ and $C$. Each
processing elements has an own id $r, 0\leq r<p$. Each processing
element independently computes the start and end indices of a block of
the output array from $C[i_r,\ldots i_{r+1}-1]$. The output blocks can be
chosen such that they partition the whole output array, and differ in
size by at most one element. Each processing element computes for both
start and end index the corresponding co-ranks. These co-ranks determine
the (disjoint) blocks of the input arrays this processor need to merge
sequentially to compute its output block.  This is shown in detail as
Algorithm~\ref{alg:merge}.

\begin{proposition}
Algorithm~\ref{alg:merge} merges ordered arrays $A$ and
$B$ of $m$ and $n$ elements stably using $p$ processing elements.
The number of elements to merge per processing element is at most 
$\ceiling{\frac{m+n}{p}}$, and the total time complexity is
$\mathcal{O}(\frac{n+m}{p}+\log\min(m,n))$.
\end{proposition}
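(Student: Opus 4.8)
The plan is to establish three things in turn: that Algorithm~\ref{alg:merge} produces the correct stable merge, that each processing element handles at most $\ceiling{\frac{m+n}{p}}$ elements, and that the per-processor running time is $\mathcal{O}(\frac{m+n}{p}+\log\min(m,n))$. Throughout I write $(j_r,k_r)$ for the co-ranks of $i_r=\floor{r\frac{m+n}{p}}$ computed in lines 3--4. Two elementary observations set everything up. First, the block boundaries are monotone, $0=i_0\leq i_1\leq\dots\leq i_p=m+n$, so the output blocks $C[i_r,\ldots,i_{r+1}-1]$ partition $C$. Second, the co-ranks inherit this monotonicity: since $C[0,\ldots,i_r-1]$ is a prefix of $C[0,\ldots,i_{r+1}-1]$ and, by Lemma~\ref{lem:coranks}, $j_r$ (resp.\ $k_r$) counts exactly the elements of $A$ (resp.\ $B$) appearing in that prefix, we get $j_r\leq j_{r+1}$ and $k_r\leq k_{r+1}$. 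Hence the input blocks $A[j_r,\ldots,j_{r+1}-1]$ and $B[k_r,\ldots,k_{r+1}-1]$ are well defined, and as $r$ ranges over $0,\ldots,p-1$ they partition $A$ and $B$ with neither overlaps nor gaps.

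The heart of the argument, and the step I expect to be the main obstacle, is to show that each processing element writes the correct contents into its block, namely that
\[\merge(A[j_r,\ldots,j_{r+1}-1],B[k_r,\ldots,k_{r+1}-1],\leq)=C[i_r,\ldots,i_{r+1}-1].\]
By Lemma~\ref{lem:coranks} applied at both endpoints we know $\merge(A[0,\ldots,j_r-1],B[0,\ldots,k_r-1],\leq)=C[0,\ldots,i_r-1]$ and the analogous identity at $i_{r+1}$. The claim is that the stable merge \emph{decomposes} at co-rank boundaries: running the sequential stable merge of $A$ and $B$ and pausing once $j_r$ elements of $A$ and $k_r$ of $B$ have been consumed (which, by uniqueness of the co-ranks, is precisely at output position $i_r$), its subsequent behaviour depends only on the current front pointers and the remaining suffixes, so the next $i_{r+1}-i_r$ steps reproduce exactly the sub-merge above. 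The delicate point is stability \emph{across} a cut: one must check that no equal-keyed pair is reordered at a boundary. This is exactly what the two Lemma conditions guarantee. In particular the strict inequality $B[k_{r+1}-1]<A[j_{r+1}]$ from the second condition certifies that the last $B$-element placed in block $r$ is strictly smaller than the first $A$-element deferred to block $r+1$, so no equal element of $A$ is ever pushed past an equal element of $B$; together with $A[j_{r+1}-1]\leq B[k_{r+1}]$ this says the split falls precisely where the global stable merge would itself divide. Concatenating the $p$ stably merged blocks therefore yields $C=\merge(A,B,\leq)$, and the overall merge is stable.

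It remains to bound the resources, which is routine. The number of output elements assigned to processing element $r$ is $i_{r+1}-i_r=\floor{(r+1)\frac{m+n}{p}}-\floor{r\frac{m+n}{p}}$, a floor difference that always equals $\floor{\frac{m+n}{p}}$ or $\ceiling{\frac{m+n}{p}}$ and is thus at most $\ceiling{\frac{m+n}{p}}$; and since $j+k=i$, the number of \emph{input} elements it merges is $(j_{r+1}-j_r)+(k_{r+1}-k_r)=i_{r+1}-i_r$, giving the stated load bound. For the time, lines 3--4 invoke Algorithm~\ref{alg:corank} twice, each costing $\mathcal{O}(\log\min(m,n))$ by the preceding proposition (using $\min(m,n,i,m+n-i)\leq\min(m,n)$), while line 5 is one optimal sequential stable merge of at most $\ceiling{\frac{m+n}{p}}$ elements, costing $\mathcal{O}(\frac{m+n}{p})$. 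Summing gives $\mathcal{O}(\frac{m+n}{p}+\log\min(m,n))$ per processing element, and since all $p$ elements proceed independently with no synchronization, this is also the parallel time.
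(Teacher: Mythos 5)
Your proof is correct and follows essentially the same route as the paper, which in fact offers only a two-sentence sketch after the proposition (stability from the co-ranking properties plus a stable sequential merge; load balance from the output partition; time from the co-ranking proposition). Your write-up is a faithful elaboration of exactly that argument, with the decomposition-at-co-rank-boundaries step --- which the paper leaves implicit --- spelled out explicitly via Lemma~\ref{lem:coranks} and the invariant $j+k=i$.
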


Algorithm~\ref{alg:merge} avoids synchronization by letting each
processor compute the co-ranks for both start and end index. If
synchronization is inexpensive, half the co-ranking work can be saved
by letting each processing element read the co-ranks for its end index
from the next processing element. A synchronization step is in this
case required after the co-ranks computation. It is also worth noting
that in the co-ranking procedure, when invoked in parallel by several
processing elements, concurrent reading of locations may easily
occur. However, it may well be possible to eliminate these by a careful
pipelining such as in \cite{Chen95:pbs,HagerupRub89}. The
implementation given here will run efficiently on a CREW PRAM.

Stability follows from the properties of the co-ranking procedure, and
the use of a stable, sequential merge algorithm. The number of
elements to merge per processing element differs at most by one
element. These are the main improvements over previous parallel merge
algorithms, where the number of elements to merge, although
$\mathcal{O}((n+m)/p)$, can differ by a factor of two.

Algorithm~\ref{alg:merge} assumes a shared-memory parallel system.
The algorithm can, however, easily be adapted to distributed memory
systems as shown in~\cite{Traff12:merge}. This is a considerably more
elegant, better, and easier implementation than for instance the BSP
algorithm presented in~\cite{GerbessiotisSiniolakis01}.

\end{document}